\documentclass[onecolumn,11pt,draftcls]{IEEEtran}

\usepackage{graphicx,amsfonts,amssymb,amsmath}
\usepackage{stackrel,bm}
\usepackage{cite}
\usepackage{mysymbols}


\begin{document}

\title{Connections between the Generalized Marcum $Q$-Function and a class of Hypergeometric Functions
\thanks{This work has been submitted to the IEEE for possible publication. Copyright may be transferred without notice, after which this version may no longer be accessible}}

\author{D. Morales-Jimenez,
        F. J. Lopez-Martinez,
        E. Martos-Naya,
        J. F. Paris,
        and A. Lozano
\thanks{D. Morales-Jimenez and A. Lozano are with the Department of Information and Communication Technologies, Universitat Pompeu Fabra, 08018 Barcelona (Spain). (e-mail:\{d.morales,angel.lozano\}@upf.edu)} 
\thanks{F. J. Lopez-Martinez is with the Wireless Systems Lab, Stanford University, Stanford, CA 94305-9515 (USA). (e-mail: fjlm@stanford.edu)}
\thanks{E. Martos-Naya and J. F. Paris are with Dept. Ingenieria de Comunicaciones, University of Malaga, 29071 Malaga (Spain). (e-mail: \{eduardo,paris\}@ic.uma.es)}}

\maketitle

\begin{abstract}
This paper presents a new connection between the generalized Marcum-$Q$ function and the confluent hypergeometric function of two variables, $\Phi_3$.
This result is then applied to the closed-form characterization of the bivariate Nakagami-$m$ distribution and of the distribution of the minimum eigenvalue of correlated non-central Wishart matrices, both important in communication theory. New expressions for the corresponding cumulative distributions are obtained and a number of communication-theoretic problems involving them are pointed out.
\end{abstract}

\begin{keywords}
Marcum-$Q$ function, confluent hypergeometric functions, bivariate Nakagami-$m$, non-central Wishart matrix, minimum eigenvalue distribution.
\end{keywords}

\section{Introduction} \label{sec:intro}

The number of special functions that make appearances in the communication theory arena continues to grow. Some of these are tabulated and well studied, with readily available approximations, bounds, and asymptotic expansions. Other special functions, however, are not so well studied and their numerical computation is not always as accurate and efficient as would be desirable. In such cases, the establishment of connections with other special functions can greatly reinforce the analytical toolbox available to researchers.


A wealth of works have studied the Marcum-$Q$ function (see, e.g., \cite{DiBlasio1991,Baricz2009,Corazza2002,Kapinas2009,YinSun2010}), deriving useful bounds and approximations and evidencing applications thereof. Routines for its accurate and efficient evaluation have been extensively developed and, in fact, the generalized Marcum-$Q$ function is included in most common mathematical software packages. 

  
Confluent hypergeometric functions also appear in a fair number of problems within communication theory and signal processing \cite{Alfano2010,Heath2005,Dharmawansa2011,Morales2012} \cite{Chatelain2007}. Moreover, recent distributional results \cite{Dharmawansa2011,Lopez2013} show that a number of other such problems (see, e.g.,  \cite{Narasimhan2003,Heath2005,Burel2002,
Simon05,Tan1997,Reig2002,Souza2008,Beaulieu2011,Abu-Dayya1994,Jakes1974,Zorzi1998,Wang1996}) can be solved in terms of confluent hypergeometric functions. 
Chief among these stands the $\Phi_3$ confluent hypergeometric function \cite{Gradsh00,Erdelyi1953}, which does not lend itself to easy and precise evaluation.


This paper shows that $\Phi_3$ can be expressed in terms of the generalized Marcum-$Q$ function. This connection is then applied to the closed-form characterization of two important distributions in communication-theory: the bivariate Nakagami-$m$ distribution, and the distribution of the minimum eigenvalue of correlated non-central Wishart matrices. New expressions for the corresponding cumulative distribution functions (CDFs) are obtained and some communication-theoretic problems involving them are pointed out. In particular, the connection unveiled herein, in combination with recent results in \cite{Lopez2013}, settles the standing conjecture \cite{Tan1997}\cite[p. 174]{Simon05} that the bivariate Nakagami-$m$ CDF can be expressed in terms of the generalized Marcum-$Q$ function.


\section{Preliminaries} \label{sec:preliminaries}

This section introduces the special functions under study and recalls a few properties of interest for the derivations that follow.

\begin{definition}
The generalized Marcum-$Q$ function is defined as \cite{Marcum1950, Nuttall1975, Simon05}
\begin{align} \label{eq:marcum}
Q_m \left( {a,b} \right) = \int_b^\infty  {\frac{{x^m }}{{a^{m - 1} }}\exp \left( { - \frac{{a^2  + x^2 }}{2}} \right)I_{m - 1} \left( {ax} \right)dx}
\end{align}
where $a>0$ and $b \geq 0$ are real parameters and $I_m(\cdot)$ is the $m$th order modified Bessel function of
the first kind.
The order index $m$ is an integer and typically $m \geq 0$, yet \refE{marcum} holds for negative orders too and a useful relationship between Marcum-$Q$ functions with positive and negative orders has been reported in \cite{ODriscoll2009}, namely
\begin{align} \label{eq:marcumNegOrders}
Q_m \left( {a,b} \right) = 1 - Q_{1 - m} \left( {b,a} \right) .
\end{align}
%
\end{definition}

\begin{definition}
The $\Phi_3$ confluent hypergeometric function of two variables is defined as \cite[Eq. 9.261.3]{Gradsh00} 
\begin{align} \label{eq:phi3}
\Phi _3 \left( {b,c;w,z} \right) = \sum\limits_{k = 0}^\infty  {\sum\limits_{m = 0}^\infty  {\frac{{\left( b \right)_k }}{{\left( c \right)_{k + m} }}\frac{{w^k z^m }}{{k!m!}}} }
\end{align}
where $b, c, w, z \in \real$, $c \neq 0,-1,-2,...$, and $(t)_r = \frac{\Gamma(t+r)}{\Gamma(t)}$ denotes the Pochhammer symbol with $\Gamma (\cdot)$ the Gamma function. The $\Phi_3$ function is one of the bivariate forms of the confluent hypergeometric function ${}_1F_1 (\cdot,\cdot;\cdot)$ \cite{Gradsh00}. Note that $\Phi_3$ does not exist for non-positive integer values of $c$ due to the singularities of the Gamma function. Next, we introduce a regularized version of this function, which is valid for any $c \in \real$.
\end{definition}

\begin{definition}
The regularized $\Phi_3$ function is defined as
\begin{align} \label{eq:regPhi3}
\tilde \Phi _3 \left( {b,c;w,z} \right) &= \frac{1}{{\Gamma \left( c \right)}}\Phi _3 \left( {b,c;w,z} \right) \nonumber \\
&= \sum\limits_{k = 0}^\infty  {\sum\limits_{m = 0}^\infty  {\frac{{\left( b \right)_k }}{{\Gamma \left( {c + k + m} \right)}}\frac{{w^k z^m }}{{k!m!}}} }
\end{align}
with $b, c, w, z \in \real$. For the special cases $w=0$ and/or $z=0$, $\tilde \Phi_3$ reduces to
\begin{align} \label{eq:Phi3_specialCases1}
\tilde \Phi _3 \left( {b,c;0,z} \right) &= z^{(1 - c)/2} I_{c - 1} \left( {2\sqrt z } \right) \\ \label{eq:Phi3_specialCases2}
\tilde \Phi _3 \left( {b,c;w,0} \right) &= {}_1\tilde F_1 \left( {b,c;w} \right) \\ \label{eq:Phi3_specialCases3}
\tilde \Phi _3 \left( {b,c;0,0} \right) &= \frac{1}{\Gamma (c)} 
\end{align}
where ${}_1 \tilde F_1 (b,c;w) = {}_1 F_1(b,c;w) / \Gamma(c) $ is the regularized confluent hypergeometric function \cite{Gradsh00}.
\end{definition}

The Laplace transforms of $\Phi_3$ and $\tilde \Phi _3$ are known and, due to their simple form, will be crucial in the ensuing derivations. Given the function $f(t) = {t^{c - 1} \tilde \Phi _3 \left( {b,c;xt,yt} \right)}$, its Laplace transform is given by \cite[Eq. 4.24.9]{Erdelyi1953}
\begin{align} \label{eq:LaplaceRegPhi3}
\Lcal \left\{ {f(t);s} \right\} = s^{ - c} \left( {1 - \frac{x}{s}} \right)^{ - b} e^{y/s} .
\end{align}

\section{Main Result} \label{sec:results}

The main result, presented in this section, rests on two new lemmas that provide, respectively, a new representation for the generalized Marcum-$Q$ function and a recursive relationship for $\tilde \Phi_3$.    

\begin{lemma} \label{lemma1}
The generalized Marcum-$Q$ function can be expressed in terms of $\tilde \Phi_3$ as
\begin{align} \label{eq:marcum_phi3}
Q_m \left( {a,b} \right) &= \left( {\frac{{a^2 }}{2}} \right)^{1 - m} \exp{ \left( - \frac{{a^2  + b^2 }}{2} \right)} \tilde \Phi _3 \left( {1,2 - m;\frac{{a^2 }}{2},\frac{{a^2 b^2 }}{4}} \right) , \qquad m \in \integer .
\end{align}
\end{lemma}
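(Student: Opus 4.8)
The plan is to expand the modified Bessel function inside the defining integral \eqref{eq:marcum} as an ascending power series and then recognize the resulting double series as the one defining $\tilde\Phi_3$ in \eqref{eq:regPhi3}. Concretely, I would insert the series $I_\nu(z)=\sum_{k\ge0}(z/2)^{2k+\nu}/(k!\,\Gamma(k+\nu+1))$ with $\nu=1-m$ (equivalently $I_{m-1}=I_{1-m}$ for integer $m$), read with the standing convention that $1/\Gamma$ vanishes at non-positive integers; this single expression is valid for every $m\in\integer$, so no case split on the sign of $m$ is needed. Substituting into \eqref{eq:marcum} and interchanging summation and integration --- legitimate by monotone convergence, the integrand being non-negative --- the key point is that the prefactor $x^m$ combines with the factor $x^{2k+1-m}$ of the $k$-th term to leave $\int_b^\infty x^{2k+1}e^{-x^2/2}\,dx$, which is finite for every $k\ge0$ and every $b\ge0$.

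I would then evaluate that elementary integral via $u=x^2/2$, obtaining $2^k\,\Gamma(k+1,b^2/2)$, and use that $\Gamma(k+1,b^2/2)=k!\,e^{-b^2/2}\sum_{n=0}^{k}(b^2/2)^n/n!$ because $k+1$ is a positive integer. Collecting the powers of $a$ and the factors of $2$, which condense to $(a^2/2)^{k+1-m}$, gives
\begin{align*}
Q_m(a,b)=e^{-(a^2+b^2)/2}\sum_{k\ge0}\frac{(a^2/2)^{k+1-m}}{\Gamma(k+2-m)}\sum_{n=0}^{k}\frac{(b^2/2)^n}{n!},\qquad m\in\integer .
\end{align*}
In parallel I would expand the right-hand side of \eqref{eq:marcum_phi3}: in the series \eqref{eq:regPhi3} the Pochhammer symbol $(1)_k=k!$ cancels the $k!$, and writing $(a^2b^2/4)^n=(a^2/2)^n(b^2/2)^n$ gives $\tilde\Phi_3(1,2-m;a^2/2,a^2b^2/4)=\sum_{k,n\ge0}(a^2/2)^{k+n}(b^2/2)^n/(n!\,\Gamma(2-m+k+n))$. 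Re-indexing this double sum by the total degree $\ell=k+n$ in $a^2/2$ and then restoring the prefactor $(a^2/2)^{1-m}$ reproduces exactly the double series displayed above, so the two sides of \eqref{eq:marcum_phi3} agree for every integer $m$.

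The substance here is light; the only care needed is bookkeeping. The main things to get right are: phrasing the Bessel series so it holds for all integer orders at once (which is precisely what makes the $m\le0$ case free) and checking that the terms carrying a vanishing $1/\Gamma$ coefficient are harmless; and justifying the interchange of summation and integration together with the re-indexing of the double series, both of which follow from absolute (resp.\ monotone) convergence since everything is dominated by $e^{(a^2+b^2)/2}<\infty$. If one prefers to keep the textbook form of the $I_{m-1}$ series, an alternative is to prove the identity for $m\ge1$ exactly as above and then obtain $m\le0$ from the reflection formula \eqref{eq:marcumNegOrders}, which reduces the remaining work to an elementary partition identity for the monomials $(a^2/2)^p(b^2/2)^q/(p!\,q!)$; I would favour the uniform argument, which sidesteps the case distinction.
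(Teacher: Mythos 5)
Your proof is correct, but it takes a genuinely different route from the paper. You work directly from the defining integral \eqref{eq:marcum}: replacing $I_{m-1}$ by $I_{1-m}$ (valid for integer order), expanding the Bessel series, integrating term by term via Tonelli, and using the finite-sum form of $\Gamma(k+1,b^2/2)$ before re-indexing the resulting double series into the $\tilde\Phi_3$ series. The computation checks out, including the bookkeeping for $m\ge 2$ where the first $m-1$ terms vanish through the $1/\Gamma$ convention while the negative powers of $a^2/2$ they carry remain finite (since $a>0$), and the non-negativity of all terms does justify the interchange. The paper instead starts from the contour-integral representation \eqref{eq:marcumIntegral}, splits the contour via Cauchy--Goursat and Jordan's lemma into a Bromwich integral plus a residue at $p=1$, and evaluates the Bromwich integral using the tabulated Laplace transform \eqref{eq:LaplaceRegPhi3} of $\tilde\Phi_3$; this yields the intermediate identity \eqref{eq:marcum_phi3_2}, valid for all real $m$, from which the lemma follows via the order-reflection formula \eqref{eq:marcumNegOrders}. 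Your argument is more elementary and self-contained (no complex analysis, no appeal to a transform table) and handles all integer $m$ uniformly; the paper's argument is heavier but buys the real-order identity \eqref{eq:marcum_phi3_2}, which the authors explicitly reuse after Theorem \ref{mainThm} to extend the main result to $c\in\real$. Your route is intrinsically tied to integer $m$, since both $I_{m-1}=I_{1-m}$ and the closed form of the upper incomplete gamma function require integrality, so it could not be adapted to recover that generalization.
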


\begin{proof}
See Appendix \ref{apx:apx1}.
\end{proof}

\begin{lemma} \label{lemma3}
The regularized $\Phi_3$ function can be obtained recursively as
\begin{align} \label{eq:phi3_recursive}
\tilde \Phi _3 \left( {b,c;w,z} \right) = \left( \frac{z}{w} \right)^{b-1} \sum\limits_{i = 0}^{2(b - 1)} 
\frac{1}{z^i} \Acal _i (b,c;z) \tilde \Phi _3 \left( {1,c - i;w,z} \right)
\end{align}
for any $b \in \integer$ and $b > 0$, with $\Acal_i (b,c;z)$ being the polynomial on $z$ given by
\begin{align} \label{eq:phi3_coeffs}
\Acal _i (b,c;z)  = \frac{( - 1)^{b - 1} }{\left( b-1 \right)!} \sum\limits_{k = 0}^{\left\lfloor {i/2} \right\rfloor } {\frac{{( - 1)^k \left( {b - i + k} \right)_{i - k} \left( {c - i - 1 + k} \right)_{i - 2k} }}{{\left( {i - 2k} \right)!k!}} z^{k} } .
\end{align}
\end{lemma}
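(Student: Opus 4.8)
The plan is to push the first parameter $b$ down to unity while tracking what happens to the $\Gamma$-denominators. Write \refE{regPhi3} as
\begin{equation}
\tilde\Phi _3 \left( {b,c;w,z} \right)=\sum_{k\ge0}\frac{(b)_k}{k!}\,w^k\,\psi_{c+k}(z),\qquad \psi_\alpha(z):=\sum_{m\ge0}\frac{z^m}{\Gamma(\alpha+m)\,m!},
\end{equation}
and observe, via \refE{Phi3_specialCases1}, that $\psi_\alpha(z)=z^{(1-\alpha)/2}I_{\alpha-1}(2\sqrt z)$; a one-line rearrangement of the series for $\psi_{\alpha-1}$ (replacing $1/\Gamma(\alpha-1+m)$ by $(\alpha-1+m)/\Gamma(\alpha+m)$) then yields the three-term recurrence
\begin{equation}
z\,\psi_{\alpha+1}(z)=\psi_{\alpha-1}(z)-(\alpha-1)\,\psi_\alpha(z),
\end{equation}
which is just the standard contiguous relation of $I_{\alpha-1}$. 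Next I would extract from \refE{regPhi3}, using $(b)_k=\tfrac{b-1+k}{b-1}(b-1)_k$ and the split $b-1+k=(c+k+m-1)+(b-c-m)$ inside the double sum, the first-parameter contiguous relation
\begin{equation}
(b-1)\,\tilde\Phi _3 \left( {b,c;w,z} \right)=\tilde\Phi _3 \left( {b-1,c-1;w,z} \right)+(b-c)\,\tilde\Phi _3 \left( {b-1,c;w,z} \right)-z\,\tilde\Phi _3 \left( {b-1,c+1;w,z} \right)
\end{equation}
for integer $b\ge2$ and arbitrary $c$. Equivalently, iterating $\partial_w\tilde\Phi _3 \left( {\beta,\gamma;w,z} \right)=\beta\,\tilde\Phi _3 \left( {\beta+1,\gamma+1;w,z} \right)$ gives $\tilde\Phi _3 \left( {b,c;w,z} \right)=\tfrac{1}{(b-1)!}\,\partial_w^{\,b-1}\tilde\Phi _3 \left( {1,c-b+1;w,z} \right)$, which also drops out of \refE{LaplaceRegPhi3} by writing $(1-w/s)^{-b}=\tfrac{s^{b-1}}{(b-1)!}\partial_w^{\,b-1}(1-w/s)^{-1}$; either form can serve as the engine of the induction below.

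The proof then proceeds by induction on $b$. The base case $b=1$ is immediate, since $\Acal_0(1,c;z)=1$ makes \refE{phi3_recursive} an identity. For the inductive step, substitute the hypothesis for $b-1$ at the three arguments $c-1,c,c+1$ into the contiguous relation; this writes $\tilde\Phi _3 \left( {b,c;w,z} \right)$ as $(z/w)^{b-2}$ times a $\integer[z]$-linear combination of the functions $\tilde\Phi _3 \left( {1,\gamma;w,z} \right)=\sum_{k\ge0}w^k\psi_{\gamma+k}(z)$. Comparing the coefficient of each power of $w$ on the two sides of \refE{phi3_recursive} then reduces the claim to the one-variable identities
\begin{equation}
\binom{N+b-1}{b-1}\psi_{c+N}(z)=z^{b-1}\sum_{i=0}^{2(b-1)}z^{-i}\,\Acal_i(b,c;z)\,\psi_{c+b-1-i+N}(z),\qquad N\in\integer,\ N\ge0,
\end{equation}
together with the vanishing, for $j=1,\dots,b-1$, of $z^{b-1}\sum_{i}z^{-i}\Acal_i(b,c;z)\,\psi_{c-i+b-1-j}(z)$ (these last conditions encode the cancellation of the negative powers of $w$, and of $z$, carried by individual terms of \refE{phi3_recursive} when $b\ge2$). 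In the displayed identity $\psi_{c+N}$ is the central member of the $2b-1$ consecutive functions appearing on the right, and the content of the step is that repeated application of the three-term recurrence collapses that combination onto its central term with exactly the stated binomial weight.

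The genuine obstacle is this last point: verifying that the explicit double sum \refE{phi3_coeffs} --- with its product of Pochhammer symbols $(b-i+k)_{i-k}$ and $(c-i-1+k)_{i-2k}$ and the weight $1/[(i-2k)!\,k!]$ --- really does satisfy the recursion in $b$ (with $c$ shifted by $-1,0,+1$) that the inductive step imposes on the $\Acal_i$. I would attack this by putting all the $\Gamma$-denominators over a common $\Gamma(c+k+m)$ through $1/\Gamma(c+k+m-l)=(c+k+m-l)_l/\Gamma(c+k+m)$, after which the statement becomes a Chu--Vandermonde-type identity among the inner summation indices and can be closed by an elementary inner induction; the vanishing of $(b-i+k)_{i-k}$ for $i\ge b$ and small $k$ is precisely what keeps the apparent negative powers of $z$ from actually occurring. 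As a sanity check, \refE{phi3_coeffs} gives $\Acal_0(2,c;z)=-1$, $\Acal_1(2,c;z)=2-c$, $\Acal_2(2,c;z)=z$, so for $b=2$ identity \refE{phi3_recursive} reads $w\,\tilde\Phi _3 \left( {2,c;w,z} \right)=\tilde\Phi _3 \left( {1,c-2;w,z} \right)+(2-c)\,\tilde\Phi _3 \left( {1,c-1;w,z} \right)-z\,\tilde\Phi _3 \left( {1,c;w,z} \right)$, and matching the coefficient of $w^N$ this is precisely the three-term recurrence at $\alpha=c+N-1$, with the $N=0$ coefficient vanishing by the same recurrence at $\alpha=c-1$.
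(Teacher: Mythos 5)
Your reduction is sound and genuinely different in flavor from the paper's: you expand directly in the series, writing $\tilde \Phi _3 \left( {b,c;w,z} \right)=\sum_k \frac{(b)_k}{k!}w^k\psi_{c+k}(z)$ and driving everything with the Bessel-type recurrence $z\psi_{\alpha+1}=\psi_{\alpha-1}-(\alpha-1)\psi_\alpha$, whereas the paper differentiates the Laplace transform \refE{LaplaceRegPhi3} to obtain the contiguous relation $(b-1)w\,\tilde \Phi _3 \left( {b,c;w,z} \right)=\tilde \Phi _3 \left( {b-1,c-2;w,z} \right)-(c-2)\tilde \Phi _3 \left( {b-1,c-1;w,z} \right)-z\tilde \Phi _3 \left( {b-1,c;w,z} \right)$ and iterates it. Be aware that it is the paper's relation, not yours, whose iteration lands exactly on the index range $i=0,\dots,2(b-1)$ of \refE{phi3_recursive}: your relation shifts $c$ by $-1,0,+1$, so substituting the inductive hypothesis produces terms $\tilde \Phi _3 \left( {1,c+1-\cdot\,;w,z} \right)$ outside that range, which you can only remove by passing to coefficients of $w^N$ and exploiting relations among the $\psi$'s. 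That is workable but adds friction; tellingly, your own $b=2$ sanity check verifies the paper's relation rather than yours.

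The genuine gap is the one you flag yourself: nothing in the writeup establishes that the explicit polynomial \refE{phi3_coeffs} satisfies the recursion the inductive step imposes, i.e., that
\begin{equation*}
\frac{(b)_N}{N!}\,\psi_{c+N}(z)=z^{b-1}\sum_{i=0}^{2(b-1)}z^{-i}\,\Acal_i(b,c;z)\,\psi_{c+b-1-i+N}(z)
\end{equation*}
holds with exactly that weight (plus the vanishing conditions for $N<0$). Saying you ``would attack this'' via a Chu--Vandermonde-type identity closed by an inner induction is a plan, not a proof, and this identity is the entire content of the lemma: the mere existence of \emph{some} expansion of $\tilde \Phi _3 \left( {b,c;w,z} \right)$ over $\tilde \Phi _3 \left( {1,c-i;w,z} \right)$ follows immediately from either contiguous relation, so all of the work lies in certifying the closed form of $\Acal_i$. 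In fairness, the paper is no more rigorous here --- it states that the formula for its coefficients $\alpha_i$ ``can be inferred from the first coefficients,'' which is pattern-matching rather than proof --- so your proposal matches the published argument in rigor while being more honest about where the difficulty sits. A complete proof must actually verify that \refE{phi3_coeffs} solves the three-term coefficient recursion (a finite Pochhammer computation). Your observation that the vanishing of $(b-i+k)_{i-k}$ for $k\le i-b$ is what prevents spurious negative powers of $z$ is correct and worth retaining in any final version.
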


\begin{proof}
See Appendix \ref{apx:apx2}.
\end{proof}

Leveraging the foregoing lemmas, the main result in this paper can be put forth.

\begin{thm} \label{mainThm}
The $\tilde \Phi_3$ function is given in terms of the Marcum-$Q$ function as
\begin{align} \label{eq:phi3_marcum}
\tilde \Phi _3 \left( {b,c;w,z} \right) = \left( \frac{z}{w} \right)^{b-1} \sum\limits_{i = 0}^{2(b - 1)}  \frac{\Acal _i (b,c;z)}{ w^{c - i - 1} z^i } \exp{\left( w + \frac{z}{w} \right)}
Q_{2-c+i} \left( \sqrt{2w}, \sqrt{2\frac{z}{w}} \right) 
\end{align}
with $b,c \in \integer$, $b > 0$, $z \neq 0$, $w \neq 0$, and $\Acal_i (b,c;z)$ given by \refE{phi3_coeffs}.
\end{thm}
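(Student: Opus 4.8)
The plan is to prove \refE{phi3_marcum} by direct substitution, chaining the two preliminary lemmas: Lemma~\ref{lemma3} collapses a general $\tilde\Phi_3(b,c;w,z)$ (with $b$ a positive integer) into a finite linear combination of terms $\tilde\Phi_3(1,c-i;w,z)$, and Lemma~\ref{lemma1}, solved for $\tilde\Phi_3$, turns each such term into a single generalized Marcum-$Q$ function. No additional analysis is required; the content is purely algebraic bookkeeping of the exponents.

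Concretely, I would first invoke \refE{phi3_recursive} to write
\[
\tilde\Phi_3(b,c;w,z) = \left(\frac{z}{w}\right)^{b-1}\sum_{i=0}^{2(b-1)}\frac{\Acal_i(b,c;z)}{z^i}\,\tilde\Phi_3(1,c-i;w,z).
\]
Every summand is a $\tilde\Phi_3$ whose first parameter equals $1$, which is exactly the form appearing on the right-hand side of \refE{marcum_phi3}. For fixed $i$, set $m=2-c+i$ (an integer, since $c\in\integer$), $a=\sqrt{2w}$, and take $\sqrt{2z/w}$ as the second Marcum argument; then $a^2/2=w$ and $a^2(2z/w)/4=z$, so Lemma~\ref{lemma1} reads
\[
Q_{2-c+i}\!\left(\sqrt{2w},\sqrt{\tfrac{2z}{w}}\right)=w^{c-i-1}\exp\!\left(-w-\tfrac{z}{w}\right)\tilde\Phi_3(1,c-i;w,z),
\]
and solving for the $\tilde\Phi_3$ term gives $\tilde\Phi_3(1,c-i;w,z)=w^{-(c-i-1)}\exp(w+z/w)\,Q_{2-c+i}(\sqrt{2w},\sqrt{2z/w})$. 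Substituting this back into the displayed sum and combining $z^{-i}w^{-(c-i-1)}=(w^{c-i-1}z^i)^{-1}$ yields \refE{phi3_marcum} verbatim.

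I do not expect any real obstacle, but two points merit a remark. First, the generalized Marcum-$Q$ function in \refE{marcum} is defined for a positive real first argument and a nonnegative real second argument, so the substitutions $a=\sqrt{2w}$ and $\sqrt{2z/w}$ are literal only when $w>0$ and $z/w>0$; for the stated range $w\neq 0$, $z\neq 0$ the identity then extends by analytic continuation, both sides being analytic in $(w,z)$ away from the coordinate axes. Second, the orders $2-c+i$ occurring in the sum may be negative, but Lemma~\ref{lemma1} is valid for every integer order (and, if preferred, one may convert to nonnegative orders via \refE{marcumNegOrders}); likewise, working with $\tilde\Phi_3$ rather than $\Phi_3$ throughout is what keeps everything well defined when $c-i$ is a non-positive integer.
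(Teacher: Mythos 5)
Your proposal is correct and follows exactly the paper's own route: apply Lemma~\ref{lemma3} to reduce to terms $\tilde\Phi_3(1,c-i;w,z)$, then invert Lemma~\ref{lemma1} with $m=2-c+i$, $a=\sqrt{2w}$, second argument $\sqrt{2z/w}$, and substitute. The exponent bookkeeping checks out, and your remarks on analytic continuation and negative orders only add care the paper leaves implicit.
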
 

\begin{proof}
The result follows directly from lemmas \ref{lemma1} and \ref{lemma3} after solving for $\tilde \Phi _3 \left( {1,c - i;w,z} \right)$ in \refE{marcum_phi3} and substituting in \refE{phi3_recursive}. 
\end{proof}

A few comments on the above theorem are in order:
\begin{itemize}
\item For the special cases $w=0$ and $z=0$, simpler connections are respectively given in   \refE{Phi3_specialCases1} and \refE{Phi3_specialCases2} in terms of $I_m(\cdot)$ and ${}_1 F_1(\cdot,\cdot;\cdot)$. If $w=z=0$, then \refE{Phi3_specialCases3} gives $\tilde \Phi_3$ in terms of the Gamma function. 

\item Albeit $\tilde \Phi _3 \left( {b,c;w,z} \right)$ is defined for any real value of its arguments, Thm. \ref{mainThm} is restricted to integer values of $b$ and $c$; these are precisely the cases of interest in communication theory. Moreover, the validity of \refE{phi3_marcum} can be straightforwardly extended to $c \in \real$ by applying \refE{marcum_phi3_2} in Appendix \ref{apx:apx1} in place of \refE{marcum_phi3}.
\item Negative values of $w,z$ (which do not correspond to known communication theory problems) imply complex arguments of the Marcum-$Q$ function. This is not an issue since the Marcum-$Q$ definition in \refE{marcum} also holds for complex arguments by analytic continuation \cite{DiBlasio1991}.
\end{itemize}

Thm. \ref{mainThm} allows expressing any result involving $\Phi_3$ in terms of the generalized Marcum-$Q$ function. Besides having archival value, this relationship
greatly facilitates both the evaluation of such results, and any subsequent analysis thereof.

\section{Applications} \label{sec:apps}

The generalized Marcum-$Q$ and the $\Phi_3$ functions appear in a number of communication theory problems. The new connection between these functions presented in Thm. \ref{mainThm} can be therefore applied directly to such problems. For instance, $\Phi_3$ appears in two important distributions: the bivariate Nakagami-$m$ distribution, and the distribution of the minimum eigenvalue of non-central Wishart matrices.
We next exemplify the applicability of Thm. \ref{mainThm} to these specific problems.

\subsection{Bivariate Nakagami-$m$ Distribution} \label{sec:apps:bivariate}

Thm. \ref{mainThm}, together with recent results for the bivariate Nakagami-$m$ distribution in \cite{Lopez2013}, confirm the conjecture made in \cite{Tan1997} and in \cite[p. 174]{Simon05}: the joint CDF of two correlated Nakagami-$m$ variables can be expressed in terms of the generalized Marcum-$Q$ function, thereby completing the landscape of (closed-form) bivariate characterizations of the most common fading distributions.

\begin{corollary} \label{corollary:bivariateCDF}
Let $R_1$ and $R_2$ be two correlated Nakagami-$m$ random variables with positive integer fading index $m$, respective variances $\Omega_1 = \Ex[R_1^2]$ and $\Omega_2 = \Ex[R_2^2]$, and correlation coefficient $\rho$, and let $\hat R_1$ and $\hat R_2$ be the normalized (unit-variance) versions of $R_1$ and $R_2$. Then, the joint CDF of $\hat R_1$ and $\hat R_2$ is given by  
\begin{align} \label{eq:bivariateCDF}
F_{\hat R_1 ,\hat R_2 } \left( {r_1 ,r_2 } \right) &= \frac{{\gamma \left( {m,m \, r_2^2 } \right)}}{{(m - 1)!}} - e^{ - mr_1^2 } \sum\limits_{k = 0}^{m - 1} {\frac{{(m \, r_1^2 )^k }}{{k!}}Q_{1 - k} \left( {r_2 a,r_1 b} \right)} \nonumber \\
& \hspace{-1cm} + e^{ - m \, r_2^2 } \sum\limits_{k = 0}^{m - 1} {\sum\limits_{i = 1}^{m - k} {\sum\limits_{r = 0}^{2(i - 1)} \frac{\left( {1 - \rho } \right)^r }{k!\rho ^r } \left( {\frac{{m \, r_1^2 }}{\rho }} \right)^{k + i - r - 1} \hspace{-2mm} \Acal _r \left( {i,k + i;\frac{{\left( {r_1 r_2 ab} \right)^2 }}{4}} \right) Q_{2 - k - i + r} \left( {r_2 b,r_1 a} \right)} }
\end{align}
where $a = \sqrt{\frac{2m}{1 - \rho }}$, $b = a \sqrt \rho$, $\gamma (\cdot,\cdot)$ is the lower incomplete gamma function \cite{Gradsh00}, and $\Acal_r (\cdot,\cdot;\cdot)$ are the polynomials defined in \refE{phi3_coeffs}. It follows that 
\begin{align} \label{eq:bivariateCDF2}
F_{R_1 ,R_2 } \left( {r_1 ,r_2 } \right) = F_{\hat R_1 ,\hat R_2 } \left( {\frac{{r_1 }}{{\sqrt {\Omega _1 } }},\frac{{r_2 }}{{\sqrt {\Omega _2 } }}} \right) .
\end{align}
\end{corollary}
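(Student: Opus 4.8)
The plan is to combine the recent closed-form representation of the bivariate Nakagami-$m$ CDF obtained in \cite{Lopez2013} --- which expresses $F_{\hat R_1,\hat R_2}(r_1,r_2)$ through the $\Phi_3$ (equivalently $\tilde\Phi_3$) function --- with Theorem \ref{mainThm}, which turns every such $\tilde\Phi_3$ into a finite sum of generalized Marcum-$Q$ functions. Since the fading index $m$ is a positive integer, all the $\tilde\Phi_3$ parameters that occur in \cite{Lopez2013} are positive integers bounded by $m$, so Theorem \ref{mainThm} applies verbatim, and the corollary reduces essentially to a substitution followed by algebraic consolidation. (Together with the availability of efficient routines for $Q_m$, this is precisely what settles the conjecture of \cite{Tan1997} and \cite[p.~174]{Simon05}.)

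First I would take the \cite{Lopez2013} expression and make explicit the arguments of each $\tilde\Phi_3$: with $a=\sqrt{2m/(1-\rho)}$ and $b=a\sqrt\rho$ they have the form $w=r_2^2 b^2/2$ and $z=(r_1 r_2 a b)^2/4$, so that $\sqrt{2w}=r_2 b$ and $\sqrt{2z/w}=r_1 a$, matching the arguments appearing inside the Marcum-$Q$ functions in \refE{bivariateCDF}. I would then replace each generic term $\tilde\Phi_3(b,c;w,z)$ (those with $w,z\neq 0$) by the finite sum \refE{phi3_marcum}, carrying along the coefficient polynomials $\Acal_i(b,c;z)$ of \refE{phi3_coeffs}. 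The factor $(z/w)^{b-1}$, the powers $w^{-(c-i-1)}z^{-i}$ and the $\exp(w+z/w)$ produced by the theorem combine with the exponential prefactor already present in the \cite{Lopez2013} representation to yield the $e^{-m r_2^2}(\cdots)$ block, and a relabeling of the summation indices (the inner index of \refE{phi3_marcum} becoming $r$, the $\tilde\Phi_3$ parameters $b,c$ becoming $i$ and $k+i$, with $k$ the outer index inherited from \cite{Lopez2013}, and $2(b-1)\to 2(i-1)$) casts this into the triple sum over $k,i,r$ in the last line of \refE{bivariateCDF}.

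What remains are the degenerate terms, which Theorem \ref{mainThm} does not cover. Terms in which one argument of $\tilde\Phi_3$ vanishes are reduced by \refE{Phi3_specialCases1}--\refE{Phi3_specialCases3}: a vanishing $z$ leaves a regularized ${}_1F_1$ and a vanishing $w$ a modified Bessel function; these, together with the standard finite-sum identities for $Q_m(a,b)$ at integer $m$ and the relation \refE{marcumNegOrders} between positive and negative orders, give rise to the lower incomplete gamma term $\gamma(m,m r_2^2)/(m-1)!$ and to the single sum $e^{-m r_1^2}\sum_k (m r_1^2)^k/k!\,Q_{1-k}(r_2 a,r_1 b)$, whose orders $1-k$ run down to $2-m$ --- hence the need for \refE{marcumNegOrders}. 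Finally, \refE{bivariateCDF2} is immediate: since $R_i=\sqrt{\Omega_i}\,\hat R_i$ with $\hat R_i$ of unit variance, $F_{R_1,R_2}(r_1,r_2)=\Pr\{\hat R_1\le r_1/\sqrt{\Omega_1},\,\hat R_2\le r_2/\sqrt{\Omega_2}\}$.

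The main obstacle is essentially one of bookkeeping. One must track several nested summations and the polynomial coefficients $\Acal_i$ through the substitution \refE{phi3_marcum}, correctly merge the exponential and power prefactors into the compact form of \refE{bivariateCDF}, and --- the most delicate point --- verify that the degenerate ($w=0$ or $z=0$) contributions of the \cite{Lopez2013} expansion, after invoking \refE{marcumNegOrders} and the integer-order Marcum-$Q$ identities, reproduce exactly the first two lines of \refE{bivariateCDF}. It is also worth checking that every non-degenerate term produced by the \cite{Lopez2013} expansion indeed satisfies the hypotheses $b,c\in\integer$, $b>0$, $w,z\neq 0$ of Theorem \ref{mainThm}, so that no term is left unaccounted for.
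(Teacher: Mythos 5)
Your proposal follows exactly the route of the paper's own (very terse) proof: substitute Theorem \ref{mainThm} into the $\Phi_3$-based CDF of \cite[Eq.~14]{Lopez2013}, normalize the variables, and consolidate the resulting finite sums; your identification of $w=r_2^2b^2/2$, $z=(r_1r_2ab)^2/4$ and of the role of \refE{marcumNegOrders} and the special cases \refE{Phi3_specialCases1}--\refE{Phi3_specialCases3} correctly fills in the "algebraic manipulations" the paper leaves implicit. No gap; this is the same argument, spelled out in more detail.
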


\begin{proof}
The bivariate Nakagami-$m$ CDF is given in \cite[Eq. 14]{Lopez2013} in terms of the $\Phi_3$ function. Then, \refE{bivariateCDF} is obtained by virtue of Thm. \ref{mainThm}  after normalization of the random variables and some algebraic manipulations.
\end{proof}

\begin{corollary} \label{corollary:bivariateCDF_rayl}
The well-known expression for the bivariate Rayleigh CDF can be recovered from Corollary \ref{corollary:bivariateCDF} by setting $m=1$ and using \cite[Eq. 8.352.1]{Gradsh00} to expand the incomplete Gamma function, yielding
\begin{align} \label{eq:bivariateCDF_rayl}
F_{\hat R_1 ,\hat R_2 } \left( {r_1 ,r_2 } \right) &= 
1 - e^{- r_2^2} - e^{-r_1^2} Q_1 \left(r_2 a, r_1 b \right) + e^{-r_2^2} Q_1 \left(r_2 b, r_1 a \right) 
\end{align}
consistently with \cite[Appendix A]{Schwartz1966}.


\end{corollary}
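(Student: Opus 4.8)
The plan is to substitute $m=1$ directly into \refE{bivariateCDF} and verify, term by term, that it collapses to \refE{bivariateCDF_rayl}. For the leading term, $1/(m-1)!=1$ and $\gamma(m,m\,r_2^2)=\gamma(1,r_2^2)$, which by the elementary series expansion \cite[Eq. 8.352.1]{Gradsh00} equals $1-e^{-r_2^2}$; this produces the first two summands of \refE{bivariateCDF_rayl}. For the second term of \refE{bivariateCDF}, the sum $\sum_{k=0}^{m-1}$ reduces to the single index $k=0$, leaving exactly $-e^{-r_1^2}Q_1(r_2 a,r_1 b)$. Here I would also record that for $m=1$ the constants specialize to $a=\sqrt{2/(1-\rho)}$ and $b=a\sqrt\rho$, precisely those appearing in the classical bivariate Rayleigh law.

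The third (triple) sum in \refE{bivariateCDF} needs slightly more care. With $m=1$ the outer index is forced to $k=0$; then $i$ ranges only over $i=1$ (since $1\le i\le m-k=1$); and $r$ ranges only over $r=0$ (since $0\le r\le 2(i-1)=0$). The single surviving summand has exponent $k+i-r-1=0$, so the $(m\,r_1^2/\rho)$ power disappears, and the Marcum order is $2-k-i+r=1$; thus the term equals $e^{-r_2^2}\,\Acal_0(1,1;\cdot)\,Q_1(r_2 b,r_1 a)$. It remains to evaluate $\Acal_0(1,1;\cdot)$ from \refE{phi3_coeffs}: with $b=1$ and $i=0$ the prefactor $(-1)^{b-1}/(b-1)!=1$, and the inner sum retains only the $k=0$ term, which is $(1)_0(c-1)_0/(0!\,0!)=1$; hence $\Acal_0(1,1;\cdot)=1$ and the third term is exactly $+e^{-r_2^2}Q_1(r_2 b,r_1 a)$. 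Adding the three contributions yields \refE{bivariateCDF_rayl}.

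To close, I would identify \refE{bivariateCDF_rayl} with the known bivariate Rayleigh CDF reported in \cite[Appendix A]{Schwartz1966}; a quick sanity check is the independent case $\rho=0$, where $b=0$, $Q_1(\beta,0)=1$ and $Q_1(0,\beta)=e^{-\beta^2/2}$, so the expression reduces to $(1-e^{-r_1^2})(1-e^{-r_2^2})$, the product of the two marginal Rayleigh CDFs, as it must. There is no genuine obstacle in this corollary — the derivation is essentially bookkeeping — and the only step that warrants attention is the collapse of the nested summation limits at $m=1$ together with the explicit evaluation $\Acal_0(1,1;\cdot)=1$ from \refE{phi3_coeffs}, since a stray sign or off-by-one there would corrupt the last term.
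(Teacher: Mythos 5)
Your proposal is correct and follows exactly the route the paper intends: set $m=1$ in Corollary \ref{corollary:bivariateCDF}, expand $\gamma(1,r_2^2)=1-e^{-r_2^2}$ via \cite[Eq. 8.352.1]{Gradsh00}, collapse the sums to the single term $k=0$, $i=1$, $r=0$, and note $\Acal_0(1,1;\cdot)=1$. Your explicit verification of the summation-limit collapse and the $\rho=0$ sanity check are sound and simply make explicit what the paper leaves to the reader.
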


The closed-form characterization of the bivariate Nakagami-$m$ distribution had remained an open problem for decades. Existing expressions involved infinite summations \cite{Tan1997,Reig2002,Souza2008} or an integral of the product of Marcum-$Q$ functions \cite{Beaulieu2011}. The expression that recently appeared in \cite{Lopez2013}, in terms of $\Phi_3$, has been rewritten by virtue of Thm. \ref{mainThm} into the convenient form given in Corollary \ref{corollary:bivariateCDF}.
The problems whose analysis can benefit from this form include \cite{Simon05,Tan1997,Reig2002,Souza2008,Beaulieu2011,Lopez2013,Abu-Dayya1994,Jakes1974,Zorzi1998,Wang1996}:
\begin{itemize}
\item Determining the impact of fading correlation in dual-diversity reception or transmission \cite{Lopez2013,Abu-Dayya1994}\cite[sect. 5.2.5]{Jakes1974}.
\item Analyzing the level crossing rate and average fade duration of sampled fading envelopes \cite{Lopez2013}.
\item Establishing the transition probabilities for a first-order Markov chain that models a fading process \cite{Zorzi1998,Wang1996}.
This, in turn, can be applied to approximate the envelope of channels with non-independent fading \cite{Wang1996} or to
model the decoding success/failure with automatic repeat-request (ARQ) over successive channel realizations  \cite{Zorzi1998}.
\end{itemize}


\subsection{Minimum Eigenvalue Distribution of Non-Central Wishart Matrices}
\label{sec:apps:wishart}

\begin{definition}[Non-central Wishart matrix]
Let $\X$ be an $n \times m$ ($n \geq m$) random matrix distributed as $\Cn_{n,m} (\mUpsilon , \I_n \otimes \mSigma)$, where $\mSigma$ is the covariance of the independent complex Gaussian row vectors of $\X$, and $\mUpsilon \in \complex^{n \times m}$. Then $\W = \X^H \X$ is a complex non-central Wishart matrix that follows the distribution $\Cw _m (n,\mSigma,\mTheta)$ with $\mTheta = \mSigma^{-1} \mUpsilon^H \mUpsilon$ the non-centrality parameter.
\end{definition}

Concerning the extreme eigenvalues of $\W$, distributional results are available for uncorrelated
central ($\mSigma = \I_m$, $\mUpsilon = \0$), correlated central ($\mUpsilon = \0$), and uncorrelated non-central ($\mSigma = \I_m$) complex Wishart matrices (see, e.g., \cite{Chen1999,Chiani2003,Forrester2007,Maaref2007,Zanella2009}).
However, tractable results for the correlated non-central case had been unavailable until \cite{Dharmawansa2011}, where the minimum eigenvalue distribution of correlated non-central Wishart matrices has been expressed in terms of $\Phi_3$ for some special cases including a square $\X$, i.e., for $m=n$. Specifically, the CDF of $\lambda _{\min}$, the minimum eigenvalue of $\W \sim \Cw _m (m,\mSigma,\mSigma^{-1} \mUpsilon^H \mUpsilon)$, is given when $\mUpsilon$ has rank one
as \cite{Dharmawansa2011}
\begin{align} \label{eq:cdf_min_lambda}
F_{\lambda _{\min } } \left( \lambda  \right) = 1 - \exp \left( { - \eta  - \lambda \tr ( { \mSigma }^{ - 1} )} \right)\Phi _3 \left( {m,m,\eta ,\lambda \mu } \right) 
\end{align}
where $\eta = \tr(\mTheta)$ and $\mu = \tr(\mTheta \mSigma^{-1})$.

Analogous expressions to \refE{cdf_min_lambda}, also in terms of $\Phi_3$, are found for other special cases such as $2 \times 2$ Wishart matrices with arbitrary degrees of freedom, i.e., $m=2$ with arbitrary $n$, or $3 \times 3$ Wishart matrices with $n=4$ degrees of freedom \cite{Dharmawansa2011}.

Using \refE{cdf_min_lambda} and Thm. \ref{mainThm}, we can express the minimum eigenvalue distribution of correlated non-central Wishart matrices as follows.

\begin{corollary} \label{messi}
The CDF of the minimum eigenvalue of $\W \sim \Cw _m (m,\mSigma,\mSigma^{-1} \mUpsilon^H \mUpsilon)$ with rank-one $\mUpsilon$ can be expressed as
\begin{align} \label{eq:cdf_min_lambda_2}
F_{\lambda _{\min } } \left( \lambda  \right) &= 1 - \exp \left( { - \eta  - \lambda \tr({\mSigma}^{ - 1} )} \right) \Gamma (m) \left( {\frac{{\lambda \mu }}{\eta }} \right)^{m - 1} \nonumber \\  
& \quad \times
\sum\limits_{i = 0}^{2(m - 1)} {\Acal _i (m,m;\lambda \mu )} \frac{{\eta ^{i + 1 - m} }}{{(\lambda \mu )^i }} \exp{ \left( \eta  + \frac{\lambda \mu }{\eta } \right)} Q_{2 - m + i} \left( {\sqrt {2\eta } ,\sqrt {2\frac{{\lambda \mu }}{\eta }} } \right)
\end{align}
where the polynomials $\Acal_r (\cdot,\cdot;\cdot)$ are as in \refE{phi3_coeffs}. 
\end{corollary}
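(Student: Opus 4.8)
The plan is to derive \refE{cdf_min_lambda_2} directly from the known $\Phi_3$-representation \refE{cdf_min_lambda} by a single application of Thm.~\ref{mainThm}, since both the non-centrality structure (rank-one $\mUpsilon$) and the integer order $m$ place us squarely within the hypotheses of that theorem. Concretely, I would first rewrite the $\Phi_3$ appearing in \refE{cdf_min_lambda} in terms of its regularized version via \refE{regPhi3}, i.e.\ $\Phi_3(m,m;\eta,\lambda\mu)=\Gamma(m)\,\tilde\Phi_3(m,m;\eta,\lambda\mu)$; this step is legitimate because $c=m$ is a positive integer, so $\Gamma(m)$ is finite and nonzero.

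Next I would invoke Thm.~\ref{mainThm} with the identifications $b=m$, $c=m$, $w=\eta$, $z=\lambda\mu$, after checking the applicability conditions: $b=c=m$ is a positive integer by assumption; $w=\eta=\tr(\mTheta)\neq 0$ because the non-centrality parameter is nontrivial for a nonzero rank-one $\mUpsilon$; and $z=\lambda\mu\neq 0$ for any $\lambda>0$. The single degenerate case $\lambda=0$ is trivial, since $F_{\lambda_{\min}}(0)=0$, and can be handled separately or recovered by continuity. With these substitutions, \refE{phi3_marcum} expresses $\tilde\Phi_3(m,m;\eta,\lambda\mu)$ as the finite sum over $i=0,\dots,2(m-1)$ of Marcum-$Q$ terms $Q_{2-m+i}(\sqrt{2\eta},\sqrt{2\lambda\mu/\eta})$ weighted by the polynomials $\Acal_i(m,m;\lambda\mu)$ of \refE{phi3_coeffs}.

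Substituting this expansion back into \refE{cdf_min_lambda}, factoring out the common $\Gamma(m)\,(\lambda\mu/\eta)^{m-1}$, and rewriting the power $w^{-(c-i-1)}=\eta^{\,i+1-m}$ then reproduces \refE{cdf_min_lambda_2} term by term. There is essentially no analytical obstacle here; the only points requiring care are the bookkeeping of constants and exponents — in particular, noting that the exponential $\exp(\eta+\lambda\mu/\eta)$ generated by Thm.~\ref{mainThm} does \emph{not} cancel against the prefactor $\exp(-\eta-\lambda\tr(\mSigma^{-1}))$ of \refE{cdf_min_lambda}, because the latter carries $\lambda\tr(\mSigma^{-1})$ rather than $\lambda\mu/\eta$ — and the verification of the hypotheses of Thm.~\ref{mainThm}, especially $\eta\neq 0$ and the $\lambda=0$ edge case.
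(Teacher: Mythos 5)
Your proposal is correct and follows exactly the route the paper intends: regularize the $\Phi_3$ in \refE{cdf_min_lambda} via $\Phi_3(m,m;\eta,\lambda\mu)=\Gamma(m)\,\tilde\Phi_3(m,m;\eta,\lambda\mu)$, apply Thm.~\ref{mainThm} with $b=c=m$, $w=\eta$, $z=\lambda\mu$, and substitute back, with the exponent bookkeeping $w^{-(c-i-1)}=\eta^{\,i+1-m}$ matching \refE{cdf_min_lambda_2} term by term. Your side remarks (the non-cancellation of the exponentials since $\mu/\eta\neq\tr(\mSigma^{-1})$ in general, and the hypotheses $\eta\neq 0$, $\lambda>0$) are accurate and slightly more careful than the paper's one-line justification.
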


Corollary \ref{messi} is restricted to the case of a rank-one non-centrality parameter, which however is typically assumed in multiple-input multiple-output (MIMO) communication systems with a direct line-of-sight path between the transmitter and the receiver \cite{Jayaweera2005,Zhu2009}. Given the complexity of the underlying joint eigenvalue distribution, the CDF in \refE{cdf_min_lambda_2} is remarkably simple, involving only a finite sum of generalized Marcum-$Q$ and elementary functions.

The minimum eigenvalue distribution is important in the analysis of MIMO channels \cite{Burel2002,Heath2005,Narasimhan2003} where the received signal vector is modeled as
\begin{equation}
\y = \H \x + \n 
\end{equation}
with $\H \in \complex^{\Nr \times \Nt}$ the channel matrix containing the gains between the $\Nt$ transmit and $\Nr$ receive antennas, $\n \in \complex^{\Nr \times 1}$ the noise vector, and $\x \in \Scal$ the transmitted signal vector with entries drawn from an alphabet $\Scal$.
The minimum eigenvalue of $\H^H \H$ determines the minimum distance, $d_{\min}$, between the noiseless received signal vectors and, thereby, the error probability of a MIMO maximum likelihood (ML) receiver.
It can be shown that \cite{Burel2002}
\begin{align} \label{eq:dmin}
d_{\min} \geq \sqrt{\lambda_{\min}} \, d_0
\end{align}
where $\lambda_{\min}$ is the minimum eigenvalue of $\H^H \H$ and $d_0$ is the minimum distance between the elements of $\Scal$.
Altogether, the performance of the MIMO ML receiver is strongly linked to the distribution of $\lambda_{\min}$, which is given 
in \refE{cdf_min_lambda_2} for $\H$ having non-central correlated Gaussian entries; this encompasses both Rayleigh and Rice fading with spatial correlation.
Since ML becomes computationally unwieldy 
as the number of antennas or the transmission alphabet cardinality grows, linear and successive cancellation receivers become attractive.
The performance of such receivers
also depends on $\lambda_{\min}$ \cite{Gore2002,Narasimhan2003}.
In fact, the post-receiver SINR of the zero-forcing (ZF), minimum mean square error (MMSE), and Vertical Bell Labs Layered
Space-Time (V-BLAST) receivers satisfies \cite{Narasimhan2003}
\begin{align} \label{eq:SNRmin}
\sinr  \ge \frac{{\Es }}{{\Nt \sigma ^2 }}\lambda _{\min } 
\end{align}
where $\Es$ is the energy per symbol, i.e., $\Ex[\x \x^H] = (\Es / \Nt) \I_{\Nt}$, and $\sigma ^2$ is the noise variance.

Combining Corollary \ref{messi} with (\ref{eq:SNRmin}), the outage probability of MIMO receivers can be analyzed in fairly broad generality.
In addition, Corollary \ref{messi} has further applications, e.g., in the design and analysis of adaptive MIMO multiplexing-diversity switching \cite{Heath2005} or,
 in the context of econometrics, in characterizing the weak instrument asymptotic distribution of the Cragg-Donald statistic \cite{Stock2002}.

%
%
%
%
%


\section*{Acknowledgements}
\label{Ack}

The work of D. Morales-Jimenez and A. Lozano has been supported by the Spanish Government under projects TEC2012-34642, CSD2008-00010 (Consolider-Ingenio) and by the Catalan Government (SGR2009\#70). The work of F. J. Lopez-Martinez is supported by the University of Malaga and by the European Union under Marie-Curie COFUND U-mobility program (ref. 246550). The work of E. Martos-Naya and J. F. Paris has been supported by the Spanish Government-FEDER under projects TEC2010-18451 and TEC2011-25473.   


\appendices

\section{Proof of Lemma \ref{lemma1}} \label{apx:apx1}

The generalized Marcum-$Q$ function can be obtained as the contour integral in the complex plane \cite{Proakis2000}
\begin{align} \label{eq:marcumIntegral}
Q_m \left( {a,b} \right) = \exp \left( { - \frac{{a^2  + b^2 }}{2}} \right)\underbrace { \oint_{\Gamma _0 } {\frac{1}{{p^m (1 - p)}}\exp \left( {\frac{1}{2}\left( {\frac{{a^2 }}{p} + b^2 p} \right)} \right)dp} }_{\Ical_m(a,b)}
\end{align}
where ${\oint_{\Gamma_0}\triangleq \frac{1}{2\pi j}\int_{\Gamma_0}}$ and $\Gamma_0$ is any closed contour enclosing the singularity at $p = 0$ (in a counter-clockwise direction) and no other singularities of the integrand (cf. Fig. \ref{F1}). For convenience, we express the integral $\Ical_m \left(a,b \right)$ in \refE{marcumIntegral} as
\begin{equation} \label{eq:Im}
\Ical_m\left(a,b \right)= \oint_{\Gamma_0} F(p) \exp \left( {\frac{{b ^2 p}}{2}} \right)dp
\end{equation}
with
\begin{equation}
F(p) = \frac{1}{{1 - p}}p^{ - m} \exp \left( {\frac{a^2 }{2p}} \right).
\end{equation}

\begin{figure}[t]
\begin{center}
\includegraphics[width=.6\columnwidth]{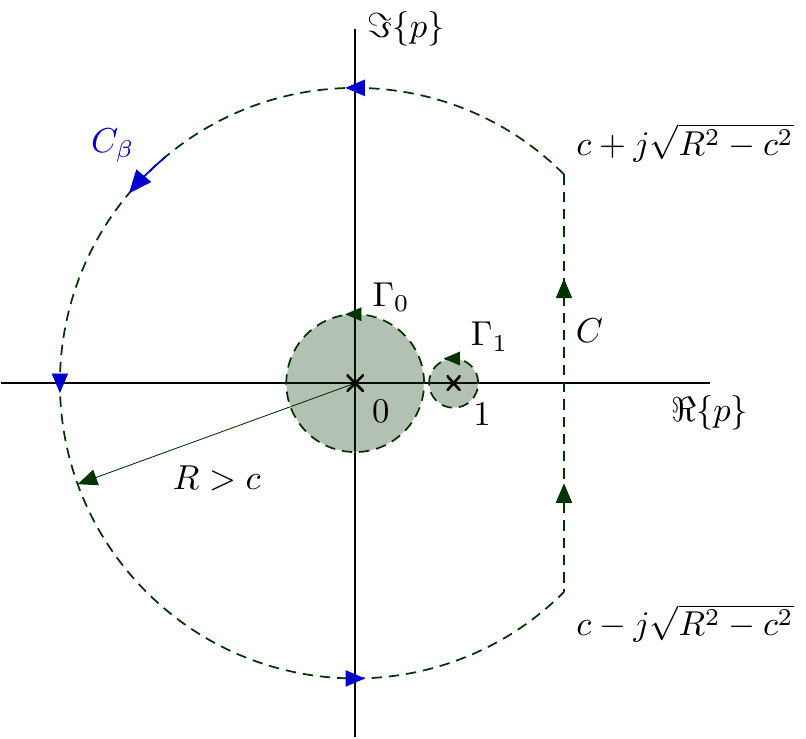}
\caption{Contour integration for integral $\mathcal{I}_{m}$.}
\label{F1}
\end{center}
\end{figure}

Let us consider the contour $C$ depicted in Fig. \ref{F1}, where $c$ is chosen to the right of all the singularities of $F(p)$. Letting $R \to \infty$, the contour integral along $C$ equals
\begin{align}
\label{eqappx}
\oint_C {F (p)\exp \left( {\frac{{b ^2 p}}{2}} \right)dp}  =& \tfrac{1}{{2\pi j}}\int_{c - j\infty }^{c + j\infty } {F (p)\exp \left( {\frac{{b ^2 p}}{2}} \right)dp} + \oint_{C_\beta  } {F (p)\exp \left( {\frac{{b ^2 p}}{2}} \right)dp}.
\end{align}
Alternatively, we can apply the Cauchy-Goursat theorem to obtain
\begin{align}
\label{eqapp2}
\oint_C {F (p)\exp \left( {\frac{{b ^2 p}}{2}} \right)dp} & = \oint_{\Gamma_0 } {F (p)\exp \left( {\frac{{b ^2 p}}{2}} \right)dp} + \oint_{\Gamma_1 } {F (p)\exp \left( {\frac{{b ^2 p}}{2}} \right)dp} 
\end{align}
where $\Gamma_0$ and $\Gamma_1$ are closed contours enclosing the singularities at $p=0$ and $p=1$, respectively. Combining (\ref{eqappx}) and (\ref{eqapp2}),
\begin{align} \label{eq:eqapp3}
&\tfrac{1}{{2\pi j}}\int_{c - j\infty }^{c + j\infty } {F (p)\exp \left( {\frac{{b ^2 p}}{2}} \right)dp}  + \oint_{C_\beta  } {F (p)\exp \left( {\frac{{b ^2 p}}{2}} \right)dp}  = \oint_{\Gamma_0 + \Gamma_1 } {F (p)\exp \left( {\frac{{b ^2 p}}{2}} \right)dp}.
\end{align}
The first integral in \refE{eqapp3} is related to the inverse Laplace transform of $F(p)$ as
\begin{align} \label{eq:Bromwich}
\tfrac{1}{2 \pi j} \int_{c - j\infty }^{c + j\infty } {F(p) \exp \left( {\frac{b ^2 p}{2}} \right) dp} = \left. \Lcal ^{-1} \left\{F(p);t \right\} \right|_{t=\frac{b^2}{2}}.
\end{align}
The integral along $C_\beta$ can be shown to be zero as follows. The modulus of $F(p)$ in $C_\beta$ is 
\begin{equation}
\label{david01}
\left| {F (p)} \right|_{p = R e^{j \theta } } = \frac{1}{{\left| {1 - p} \right|}}\frac{1}{{\left| p \right|^m }}\left| {e^{\delta /p} } \right|
\end{equation}
with $\delta=a^2/2$. Then, we can use the inequalities
\begin{align}
\frac{1}{\left| {1 - p} \right|} & \le \frac{1}{\left| {1 - \left| p \right|} \right|} \mathop  \le \limits_{R > 2} \frac{2}{R} \\
\left| {e^{\delta/p} } \right| & = \left| {e^{\mathrm{Re} \left( {\delta/p} \right)}} \right| \le e^{\left| {\delta/R} \right|} \mathop  \le \limits_{R > R_0 } e^{\left| {\delta/R_0 } \right|}
\end{align}
with arbitrary (finite) $R_0$, to write
\begin{equation}
\left| {F (p)} \right|\mathop  \le \limits_{R > R_0 } \underbrace {2e^{\left| {\delta/R_0 } \right|} }_K \underbrace {R^{ - \left( {m + 1} \right)} }_{R^{ - \ell} }.
\end{equation}
Thus, $\left| {F (p)} \right|_{p=R e^{j\theta}} \le KR^{ - \ell}$ for some $\ell>0$ on $C_{\beta}$ and, according to Jordan's lemma, the integral $\oint_{C_\beta}$ equals 0 as $R \to \infty$. Plugging \refE{Im} and \refE{Bromwich} into \refE{eqapp3} and applying the residue theorem, we arrive at 
\begin{align} \label{eq:rel1}
\Ical_m \left(a,b \right) &= \Lcal ^{ - 1} \left\{ F (p);t \right\}_{t = \frac{b^2 }{2}}  - {\text{Res}}\left\{ {F (p)\exp \left( \frac{b^2 p}{2} \right)} \right\}_{p = 1}
\end{align}
where $\text{Res}\left\{ \cdot \right\}_{p=\dummy}$ denotes the residue at $p=\dummy$. Then, we calculate the residue and use \refE{LaplaceRegPhi3} to solve the inverse Laplace transform, which yields
\begin{align} \label{eq:app1final}
\Ical_m \left(a,b \right)  =& - \left( \frac{b^2 }{2} \right)^m \tilde \Phi _3 \left( {1,m + 1;\frac{b^2}{2},\frac{a^2 b^2}{4} } \right) + \exp \left( {\frac{a^2 + b^2 }{2} } \right).
\end{align}
Further substituting \refE{app1final} in \refE{marcumIntegral} gives the alternative form for the generalized Marcum-$Q$ function,
\begin{align} \label{eq:marcum_phi3_2}
Q_m \left( {a,b} \right) &= 1 - \left( {\frac{{b^2 }}{2}} \right)^m \exp{\left( - \frac{{a^2  + b^2 }}{2} \right)} \tilde \Phi _3 \left( {1,m + 1;\frac{{b^2 }}{2},\frac{{a^2 b^2 }}{4}} \right)
\end{align}
which is valid for any $m \in \real$.

Finally, \refE{marcum_phi3} is obtained by combining (\ref{eq:marcum_phi3_2}) and \refE{marcumNegOrders}, completing the proof.

\section{Proof of Lemma \ref{lemma3}} \label{apx:apx2}

The recursive relationship in \refE{phi3_recursive} is derived by using \refE{LaplaceRegPhi3} and the frequency differentiation property of the Laplace transform,
\begin{align} \label{eq:diffLaplace}
\Lcal \left\{ {t \cdot f\left( t \right)} \right\} =  - \frac{d F\left( s \right)}{ds} . 
\end{align}
Taking the first derivative of \refE{LaplaceRegPhi3}, in light of \refE{diffLaplace},
\begin{align} \label{eq:recursive1}
& t^c \tilde \Phi _3 \left( {b,c;xt,yt} \right) = \nonumber \\
& \quad \quad \Lcal^{ - 1} \left\{ {e^{y/s} \left( {b  x  s^{ - (c + 2)} \left( {1 - \frac{x}{s}} \right)^{ - (b + 1)}  + c  s^{ - (c + 1)} \left( {1 - \frac{x}{s}} \right)^{ - b}  + y  s^{ - (c + 2)} \left( {1 - \frac{x}{s}} \right)^{ - b} } \right)} \right\} 
\end{align}
where $\Lcal^{ - 1} \left\{ \cdot \right\}$ stands for the inverse Laplace transform. Then, with the help of \refE{LaplaceRegPhi3}, we can identify the right-hand side of \refE{recursive1} as a sum of $\tilde \Phi_3$ functions, which allows us to write (after some algebra)
\begin{align} \label{eq:recursive2}
& \tilde \Phi _3 \left( {b,c;xt,yt} \right) = \nonumber \\
& \quad \frac{1}{{(b - 1)  xt}}\left( {\tilde \Phi _3 \left( {b - 1,c - 2;xt,yt} \right) - (c - 2) \tilde \Phi _3 \left( {b - 1,c - 1;xt,yt} \right) - yt \tilde \Phi _3 \left( {b - 1,c;xt,yt} \right)} \right) 
\end{align}
which shows that $\tilde \Phi_3$ can be recursively expressed via lower values of its first argument. Thus, \refE{recursive2} can be recursively applied to yield
\begin{align} \label{eq:recursive3}
\tilde \Phi _3 \left( {b,c;xt,yt} \right) = \sum\limits_{i = 0}^{2(b - 1)} \alpha _i \tilde \Phi _3 \left( {1,c - i;xt,yt} \right) 
\end{align}
where $\alpha _i$ are certain coefficients associated to $\tilde \Phi _3 \left( {1,c - i;xt,yt} \right)$ that can be obtained by working out the recursion \refE{recursive2}. An explicit formula for $\alpha _i$ can be inferred from the first coefficients $\alpha _i$, $i=0,1,2,...$, leading to
\begin{align} \label{eq:coeffs1}
\alpha _i  = \frac{{( - 1)^{b - 1} }}{{\Gamma \left( b \right)}}\frac{{\left( {yt} \right)^{b - 1 - i} }}{{\left( {xt} \right)^{b - 1} }}\sum\limits_{k = 0}^{\left\lfloor {i/2} \right\rfloor} {\frac{{( - 1)^k \left( {b - i + k} \right)_{i - k} \left( {c - i - 1 + k} \right)_{i - 2k} }}{{\left( {i - 2k} \right)!k!}}\left( {yt} \right)^k } .
\end{align}

Finally, \refE{phi3_recursive}-\refE{phi3_coeffs} follow from \refE{recursive3}-\refE{coeffs1} after setting $xt=w$, $yt=z$, and defining the polynomial
$\Acal_i (b,c;z) = (w^{b-1} / z^{b-1-i}) \alpha_i$.


\bibliographystyle{IEEEtran}
\bibliography{IEEEabrv,bib/all}

\end{document}